\documentclass[letterpaper, 10 pt, conference]{ieeeconf}
\IEEEoverridecommandlockouts

\hyphenation{non-deterministic}

\usepackage{amsmath,amssymb,amsfonts}
\usepackage{amsthm}
\usepackage{stmaryrd}
\usepackage{multirow}
\usepackage{mathtools}
\interdisplaylinepenalty=2500
\usepackage{acronym}
\usepackage{comment}
\usepackage{todonotes}
\usepackage{bm}
\usepackage{url}
\usepackage{tikz}
\usetikzlibrary{arrows,automata}
\usepackage{subcaption}
\usetikzlibrary{positioning, shapes.geometric}

\newif\ifuseboldmathops
\newif\ifuseittextabbrevs
\useboldmathopstrue   

\ifuseittextabbrevs

	\newcommand{\ie}{{\it i.e.}}

\else

	\newcommand{\ie}{i.e.}

\fi

\ifuseboldmathops
	\newcommand{\reals}{\mathbf{R}}

\else
	\newcommand{\reals}{\mathbb{R}}

\fi

\ifuseboldmathops

\else

\fi

\ifuseboldmathops
	\newcommand{\Expect}{\mathop{\bf E{}}\nolimits}
	
\else
	\newcommand{\Expect}{\mathop{\mathbb{E}{}}\nolimits}
	
\fi

\ifuseboldmathops


\else

	
\fi


\newcommand{\argmax}{\mathop{\mathrm{argmax}}}



\newcommand{\Eventually}{\Diamond \, }

\newcommand{\cutting}{\begin{tikzpicture}
	\filldraw[fill=black,scale=0.175]
	(0.5, 0.86603)-- (0,0) -- (0.5, -0.86603);
	\filldraw[fill=white, draw=black,scale=0.175]
	 (0.5, 0.86603)--(1,0) -- (0.5, -0.86603);
	\end{tikzpicture}
}
\newcommand{\distEventually}{\cutting }

\newcommand{\until}{\mbox{$\, {\sf U}$}}

\newcommand{\sink}{\mathsf{sink} }

\newcommand{\dist}[1]{\mbox{Dist}(#1)}

\newcommand{\calL}{\mathcal{L}}
\newcommand{\calA}{\mathcal{A}}
\newcommand{\calS}{\mathcal{S}}
\newcommand{\calAP}{\mathcal{AP}}

\newcommand{\calG}{\mathcal{G}}
\newcommand{\calV}{\mathcal{V}}

\newcommand{\init}{\mathsf{Init}}

\renewcommand{\vec}[1]{\mathbf{#1}}
\newcommand{\nat}{\mathbb{N}}
\newcommand{\indicator}{\mathbf{1}}


\DeclareMathOperator{\labelling}{\mathsf{L}}

\DeclareMathOperator{\dta}{\calA}
\DeclareMathOperator{\inv}{\mathcal{I}}

\DeclareMathOperator{\sta}{\calS}

\DeclareMathOperator{\prodmdp}{\mathcal{M}}
\DeclareMathOperator{\prodf}{\mathcal{F}}
\DeclareMathOperator{\prodtrans}{\bar{\Delta}}


\DeclareMathOperator{\clock}{x}
\DeclareMathOperator{\clocks}{X}
\newcommand{\mitldminus}{\mbox{MITLD}$^{-}$ }
\newcommand{\constraints}{\mathcal{C}}
\newcommand{\Occ}{\mathsf{Occ}}
\newcommand{\run}{\rho}

\theoremstyle{definition}
\newtheorem{definition}{Definition}
\newtheorem{example}{Example}
\newtheorem{problem}{Problem}
\newtheorem{lemma}{Lemma}
\newtheorem{assumption}{Assumption}


\acrodef{mdp}[MDP]{Markov Decision Process}
\acrodef{ltl}[LTL]{Linear Temporal Logic}
\acrodef{dfa}[DFA]{Deterministic Finite-state Automaton}
\acrodef{dta}[DTA]{Deterministic Timed Automaton}
\acrodef{sta}[STA]{Stochastic Timed Automaton}
\acrodef{cilp}[CILP]{countably-infinite linear programs}
\acrodef{mitl}[MITL]{Metric Interval Temporal Logic}
\acrodef{mtl}[MTL]{Metric Temporal Logic}
\acrodef{mitld}[MITLD]{Metric Interval Temporal Logic with Probabilistic Distributions}
\acrodef{stl}[STL]{Signal Temporal Logic}

\begin{document}

\title{\LARGE \bf Policy Synthesis for Metric Interval Temporal Logic with Probabilistic Distributions}

\author{Lening Li and Jie Fu
	\thanks{L. Li and J. Fu are with Robotics Engineering Department, Worcester Polytechnic Institute, Worcester, MA 01609, USA,
	{\tt\small lli4, jfu2@wpi.edu.}}%
}

\maketitle

\begin{abstract}
    Metric Temporal Logic can express temporally evolving properties with time-critical constraints or time-triggered constraints for real-time systems. This paper extends the Metric Interval Temporal Logic with a distribution eventuality operator to express time-sensitive missions for a system interacting with a dynamic, probabilistic environment. This formalism enables us to describe the probabilistic occurrences of random external events as part of the task specification and event-triggered temporal constraints for the intended system's behavior. The main contributions of this paper are two folds: First, we propose a procedure to translate a specification into a stochastic timed automaton. Second, we develop an approximate-optimal probabilistic planning problem for synthesizing the control policy that maximizes the probability for the planning agent to achieve the task, provided that the external events satisfy the specification. The planning algorithm employs a truncation in the clocks for the timed automaton to reduce the planning in a countably infinite state space to a finite state space with a bounded error guarantee. We illustrate the method with a robot motion planning example.
\end{abstract}

\section{Introduction}
\label{sec:introduction}
This paper investigates a probabilistic planning problem given a time-sensitive task expressed in Metric Interval Temporal Logic with distributions. The task specifies the assumption about the probabilistic external events and the desired agent's behavior with timing constraints related to the occurrence of the external events. Such problem is widely encountered in robotics \cite{li2019approximate, ding2014optimal}, and other cyber-physical systems \cite{lee2008cyber}. 

Temporal Logic is a formal language to describe desired system properties, such as safety, reachability, obligation, stability, and liveness \cite{manna2012temporal}. Recently, \ac{mtl} \cite{koymans1990specifying}, \ac{mitl} \cite{alur1996benefits}(a fragment of \ac{mtl}), and \ac{stl} \cite{maler2004monitoring} have received extensive attention in the control community, due to their capability of not only expressing the relative temporal ordering of events as \ac{ltl} \cite{pnueli1977temporal} but also defining the duration and timing constraints between these events and robustness of controlled systems. Given high-level specifications, the control synthesis for dynamic systems have been developed for linear systems \cite{lindemann2017robust, yang2020continuous} and linear systems subject to nondeterministic inputs from the environment \cite{raman2015reactive,farahani2015robust}, multi-agent systems \cite{liu2017communication, nikou2016cooperative}, Markov Decision Processes (MDPs) \cite{kapoor2020model}, and Mixed Logical Dynamical systems \cite{saha2016milp}. Mixed integer linear programming has been introduced for robust planning and control in discrete-time, linear systems \cite{raman2015reactive,farahani2015robust} and Mixed Logical Dynamical Systems \cite{saha2016milp}, where the \ac{stl} formula was translated to a set of constraints in the optimization problem. Recent work \cite{yang2020continuous} utilized the Control Barrier Function to plan trajectories that satisfy the \ac{stl} specifications. In particular, they utilized the lower bound of a Control Barrier Function to make the constraint on the \emph{always} temporal operator to be time-invariant. In \cite{saha2016milp}, authors considered specifications expressed in \ac{mtl} for a Mixed Logical Dynamical System. The synthesis of provably correct timed systems has been investigated in the formal methods community using automata-theoretic approaches. The authors in \cite{d2002timed} investigated a control synthesis problem for a timed system, and the specification was captured by an external timed automaton. They modeled the dynamic interaction between the timed system and the environment as a timed game, where the clocks in the timed automaton were assumed to be upper bounded. The authors \cite{xu2019controller} studied multi-agent control with intermittent communication. They solved the problem by developing a leader-follower formulation while the leader must satisfy a \ac{mitl} formula.

Existing works assume that temporal logic specifications only describe the intended system's behavior where the environment is either static \cite{xu2019controller,yang2020continuous} or nondeterministic \cite{d2002timed,raman2015reactive,farahani2015robust}. However, in practical applications, an event in the environment may follow a probability distribution. To illustrate, consider that an autonomous robot is working in a remote area and must catch a bus when a bus arrives. The robot is equipped with the knowledge that the bus's arrival (event) follows a probability distribution. An autonomous car, stopped at a red light, may have a probability distribution of how long to wait for the light to turn green. A patrolling robot may need to interdict illegal fishing, have prior knowledge about when the illegal activities may occur at different sites, and must respond to each detected activity within a bounded time duration. How do we incorporate this knowledge into the specification and how to synthesize a policy that maximizes the probability of satisfying such a time-sensitive task?

Our approach follows from assume-guarantee reasoning:
We adopt the \ac{mitl} to describe the intended system's behavior (guarantee). To capture the assumption of the environment, we consider a new class of distributional temporal logic, proposed recently in \cite{kovtunova2018cutting}. In this logic language, an operator, called ``distribution eventuality'' $\distEventually_\mu \varphi$, is introduced to specify that the first time when $\varphi$ evaluated true follows the distribution $\mu \colon [0, \infty) \to [0,1]$. We augment \ac{mitl} with the distribution eventuality operators, called \ac{mitld}, to specify the intended behavior of the system and the model of its environment jointly. To this end, we propose a procedure to first translate the \ac{mitld} formula to a stochastic timed automaton, which is a finite-state automaton augmented with a set of clocks to keep track of time-dependent constraints. Given the interaction between the agent and its environment modeled as a two-player game, we introduce a product operation between the game and the stochastic timed automaton in order to restrict the external events in the environment to satisfy the pre-defined distributions. This product operation reduces the game into a countably infinite \ac{mdp} with a reachability objective. We introduce a truncation in clocks to reduce the product \ac{mdp} with countably infinite states into one with finite states and ensure the solution found in the latter is at least $\epsilon$-optimal for the original product \ac{mdp}, where the $\epsilon$ is dependent on the truncation points. We use a running example to demonstrate our proposed method and provide discussions for future work. 

\section{Preliminaries}
\label{sec:preliminaries}
\paragraph*{\textbf{Notation}}
We use the notation $\Sigma$ for a finite set of symbols, also known as the \emph{alphabet}. An infinite sequence of symbols $\sigma = \sigma_0 \sigma_1 \ldots \ldots$ with $\sigma_i \in \Sigma$ for all $i \ge 0$, is called an \emph{infinite word}, and $\Sigma^\omega$ is the set of all infinite words that are obtained by concatenating the elements in alphabet $\Sigma$ infinitely many times. Given a random variable $X$, we let $\dist{X}$ be the set of all possible probability distributions over $X$. 

We consider a probabilistic planning problem for a controllable agent, referred to as a robot. The robot aims to satisfy a task specification in an uncertain, stochastic environment. Specifically, we consider the specification of the robot's intended behavior and the environment's behavior is expressed in a class of Metric Interval Temporal Logic with Probability Distributions (MITLD). 

\subsection{Metric Interval Temporal Logic with Probability Distributions}
The \emph{\ac{mitld}} extends \ac{mitl} \cite{alur1996benefits} with a new operator $\distEventually_{\mu} \varphi$, called \emph{distribution eventuality} \cite{kovtunova2018cutting} that expresses that the time until $\varphi$ evaluated true has a probability distribution $\mu$. Formally, \ac{mitld} formulas are built on \ac{mitl} formulas and defined inductively as follows.
\[
	\varphi \coloneqq \top \mid p \mid \neg \varphi \mid \varphi \lor \varphi \mid \varphi \until_{I} \varphi \mid \varphi \until \varphi \mid \distEventually_{\mu} \varphi,
\]
where $p \in \calAP$, $\mu$ is a probability distribution, $\top$ is unconditional true, and $I$ is a \emph{nonsingular} time interval with integer end-points. We also define temporal operator $\Eventually \varphi = \top \until \varphi$ (eventually, $\varphi$ evaluates true in the future) and $\Eventually_{I} \varphi = \top \until_{I} \varphi$ (eventually, $\varphi$ evaluates true within interval $I$ from now). The semantics of \ac{mitl} and the distribution eventuality can be found in \cite{alur1996benefits,kovtunova2018cutting} and omitted.

Note that we do not allow the distribution eventuality operator $\distEventually_{\mu}$ to appear in the scope of any negation since $\neg \distEventually_{\mu} \varphi$ simply says that the time until $\varphi$ evaluated true has a distribution different from $\mu$, which is an uninformative statement. When the context is clear, we may omit the subscript $\mu$ from these formulas. 

This distribution eventuality is introduced by \cite{kovtunova2018cutting} for Linear Temporal Logic formulas. We extend it to use with Metric Interval Temporal Logic and also restrict it to a subclass of \ac{mitld} formulas, termed as \mitldminus, that satisfy the following conditions: 1) The formula has countably many \ac{mitl} interpretations with discrete-time semantics. 2) There exists a subset $U\subset \calAP$ of atomic propositions called ``external events'' such that the distribution eventuality operator can only occur in the front of a proposition in $U$. 3) If we replace the distribution eventuality operator with eventuality operator, the resulting formula is an \ac{mitl} formula that can be equivalently expressed by a \emph{deterministic} timed automaton, introduced later.

We provide some preliminary about timed words and timed automata.

A \emph{(infinite) timed word} \cite{alur1994theory} over $\Sigma$ is a pair $w = (\tau, \sigma)$, where $\tau = \tau_0\tau_1 \ldots $ is an infinite \emph{timed sequence} and $\sigma = \sigma_0 \sigma_1 \ldots \in \Sigma^\omega$ is an infinite word. The infinite timed sequence $\tau = \tau_0\tau_1 \ldots $ satisfies:
\begin{IEEEitemize}
	\item \emph{Initialization}: $\tau_0 = 0$;
	\item \emph{Monotonicity}: $\tau$ increases strictly monotonically; \ie, $\tau_i < \tau_{i+1}$, for all $i \ge 0$;
	\item \emph{Progress}: For every $t \in \reals$, there exists some $i\ge 1$, such that $\tau_i>t$.
\end{IEEEitemize}
The conditions ensure that there are finitely many symbols (events) in a bounded time interval, known as \emph{non-Zenoness}. We also write $w = (\tau, \sigma) = (\tau_0, \sigma_0) (\tau_1, \sigma_1) \ldots $.

A fragment of \ac{mitl} that can be translated into an equivalent \ac{dta} \cite{alur1996benefits}. To define timed automata, a finite number of \emph{clock} and \emph{clock} constraints are needed: Let $\clocks = \{\clock_1, \clock_2, \ldots, \clock_M\}$ be a finite indexed set of clocks. For each clock $\clock_i \in \clocks$, we let $\calV_i$ be the range of that clock. A \emph{clock vector} \cite{bertrand2014stochastic} $v \in \calV_i$ is the vector where the $i$-th entry $v[i]$ is the value of clock $\clock_i$, for $i \in \{1, 2, \ldots, M\}$. We use $\vec{0}$ for the clock vector $v$ where $v[i]=0$ for all $i \in \{1, 2, \ldots, M\}$ and $\calV = \calV_1 \times \ldots \times \calV_M$ for the set of all possible clock vectors. For a clock vector $v \in \calV$ and $t \ge 0$, $v + t$ defines a clock vector $(v + t)[i]= v[i] + t$ for $i \in \{1, 2, \ldots, M\}$. Moreover, if $Z \subseteq \clocks$ is a subset of clocks and $v$ is a clock vector, $v_{[Z \leftarrow 0]}$ denotes the clock vector after resetting the clocks in $Z$, such that $v_{[Z \leftarrow 0]}[i] = v[i]$ for all $\clock_i \in \clocks \setminus Z$ and $v_{[Z \leftarrow 0]}[i] = 0$ for all $\clock_i \in Z$.

\begin{definition}[Clock Constraints \cite{alur1994theory}]
	For a set $\clocks$ of clocks, the set $\constraints(\clocks)$ of clock constraints is defined inductively as follows: 
	\[
		c \coloneqq \perp \mid \clock \bowtie k \mid \clock-\clock' \bowtie k \mid c\land c \mid c\lor c,
	\]
	where $\clock, \clock' \in \clocks$ are clocks, $k \in \nat$ is a non-negative integer, and $\bowtie \in \{=,\ne , <, >, \ge, \le \}$ is a comparison operator.
\end{definition}
Given a clock constraint $c$ over $\clocks$ and a clock vector $v$, we write $v \models c$ whenever $v$ satisfies the constraints expressed by $c$.

Let us recall the definition of the deterministic timed automaton that accepts an \ac{mitl} formula $\varphi^d$. The translation from a \ac{mitl} formula to a \ac{dta} is introduced in \cite{alur1996benefits,maler2006mitl}.
\begin{definition}[Deterministic Timed Automaton]
	A \emph{Deterministic Timed Automaton} associated with an \ac{mitl} formula $\varphi^d$ is a tuple $\dta = \langle L, \Sigma, \init, E, \inv, F \rangle$ with the following components:
	\begin{IEEEitemize}
		\item $L$ is a finite set of locations.	
		\item $\Sigma=2^\calAP$ is a set of symbols.
		\item $\init$ is an initial location.
		\item $E \colon L \times \Sigma \times \constraints(\clocks) \times L\times 2^{\clocks}$ is a finite set of labeled edges. 
		\item $\inv \colon L \to \constraints(\clocks)$ assigns an invariant (clock constraint) to each location.
		\item $F \subseteq L$ is the set of accepting locations. 
	\end{IEEEitemize}
\end{definition}
A \emph{state} in the timed automaton is $(\ell, v)$ consisting of a pair of location and clock vector. The initial state in the timed automaton is $(\init, \vec{0})$. The transition $(\ell, v) \xrightarrow[\tau]{\sigma} (\ell', v')$ occurs only if there exists a symbol $\sigma \in \Sigma$, a constraint $c \in \constraints(\clocks)$, and $\clocks'\subseteq \clocks$ such that $e = (\ell, \sigma, c, \ell',\clocks')$ is defined, and for every $0 \le t < \tau$, $v + t \models \inv(\ell)$, and $v + \tau \models c$. Let $v' = (v+ \tau)_{[\clocks' \leftarrow 0]}$ and $v' \models \inv(\ell')$. Intuitively, the timed automaton reads the input and also checks if the clock constraint of the edge is satisfied given the clock vector. If yes, then the transition occurs and a subset of clocks will be reset to zeros. After transition, the new clock vector should satisfy the invariant condition at the new location. 

By the definition of transition, a \emph{run} $\run$ in $\dta$ on a timed word $w = (\tau_0, \sigma_0) (\tau_1, \sigma_1) \ldots $ is an infinite sequence of transitions $\run = (\init, \vec{0})\xrightarrow[\tau_0]{ \sigma_0}(\ell_1, v_1)\xrightarrow[\tau_1-\tau_0]{ \sigma_1} \ldots (\ell_i,v_i)\xrightarrow[\tau_{i+1}-\tau_i]{ \sigma_{i+1}}(\ell_{i+1}, v_{i+1})\ldots $. The word $w = (\tau, \sigma)$ is accepted by the \ac{dta} if and only if its run reaches an accepting location in $F$; that is, $\Occ(\run) \cap (F \times \calV )\ne \emptyset$, where $\Occ(\run)$ is the set of states in $L \times \calV$ occurring in $\run$. The set of timed words accepted by $\dta$ is called \emph{its language}, denoted as $\calL(\dta)$.

\section{Problem formulation}
Recall that $U \subseteq \calAP$ is the set of atomic propositions such that for each $\alpha \in U$, $\distEventually_{\mu_{\alpha}}\alpha$ is a subformula of the task specification $\varphi$. 
We assume that the agent's action cannot change the values of propositions in $U$. The environment player controls the values of atomic propositions in $U$, and for any $\alpha \in U$, it only evaluates true only once. During the interaction, the agent will select an action, and the environment will select a subset of external events, and the next state is reached probabilistically according to a transition function.

This interaction between agent and its dynamic world can be captured by a two-player turn-based probabilistic transition system. The factored state means that a state $s\in S$ is a vector $s= [s_r,s_e, Y]^\intercal$ of robot state $s_r \in S_r$, environment state $s_e \in S_e$, and a set $Y\subseteq U$ that is a set of external events have \emph{not} occurred yet. The set $S = S_r \times S_e \times 2^U$ is a finite set of possible states reachable given the interaction.

\begin{definition}The interaction between the robot and its environment is modeled as a two-player probabilistic transition system that is a tuple
	\[
		\calG = \langle S \cup S \times A, A \cup 2^U, P, \labelling, s_0, \calAP \rangle
	\]
	with the following components:
	\begin{IEEEitemize}
		\item $S \cup S \times A$ is a set of states partitioned into $S$ and $S \times A$. At each state in $S$, the robot takes an action, and at each state in $S \times A$, the environment takes an action.
		
		\item $A$ is the set of robot actions. $2^U$ is a set of environment actions.
		
		\item At each state in $S$, $P \colon S \times A \to S \times A$ is the deterministic transition function; that is, at a state $s \in S$, given the robot action $a \in A$, the transition to the next state $(s, a)$ is with probability $P(s, a, (s, a)) = 1$. At each state in $S \times A$, $P \colon S \times A \times 2^U \to \dist{S}$ is the probabilistic transition function; that is, at a state $(s, a) \in S \times A$, let $s=(s_r, s_e, Y)$, the environment player selects an \emph{enabled} action $e \in 2^Y$, and $P((s, a), e, s') $ is the probability of reaching state $s'$ given the robot action $a$ from state $s$ and the environment action $e$. In addition, the state $s' = (s_r', s_e', Y')$ in the support of $P((s,a),e)$ must satisfy $Y'= Y\setminus e$.
		
		In other words, the environment can only select a subset of events that have not occurred yet. Given the choice of the external events, the state will be updated to record the occurred events. 
		\item $\labelling \colon S \to 2^\calAP$ is the labeling function such that $\labelling(s)$ is a set of atomic propositions evaluated true at that state. For any $s$ that is reachable with the environment action $e$ with a nonzero probability; that is, $P((s, a), e, s') > 0$, the labeling must satisfies $\labelling(s')\cap U=e$.
		
		\item $s_0$ is an initial state.
	\end{IEEEitemize}
\end{definition}
In this game model, the two players can make nondeterministic choices of actions. The labeling of states captures which events in $U$ have occurred in the most recent transition. 

We informally state the problem to be solved.
\begin{problem}
Given the game model $\calG$ and a task formula $\varphi = \varphi_A\land \varphi_G$ where $\varphi_G$ is the intended robot's behavior in \ac{mitl} and $\varphi_A$ is the assumption about the environment dynamics expressed in a \mitldminus formula, how to synthesize a control policy for the robot that maximizes the probability of satisfying the task formula?
\label{problem:informal_statement}
\end{problem}

\section{Main Results}
\label{sec:main}
In this section, we present our solution to the planning problem. 

\subsection{Translating \mitldminus formulas to Stochastic Timed Automata}
We first construct a computational model that represents the given task formula $\varphi$ in \mitldminus. To do so, we first substitute the distribution eventuality operators $\distEventually$ with the bounded eventuality operators $\Eventually_{\ge 0}$. This substitution will allow us to obtain an \ac{mitl} formula $\varphi^d$ and its corresponding \ac{dta} $\dta$ using the construction from \cite{alur1996benefits,maler2006mitl}. After the substitution, we modify transitions to construct a corresponding \ac{sta} for the original formula with distribution eventuality operators. The following assumption is made about the external events.

\begin{assumption}
	The set of random variables $\{\alpha \mid \alpha \in U\}$ are mutually independent.
\end{assumption}

Recall that the time until the formula $\alpha$ is first observed has a distribution $\mu_\alpha \colon [0, 
\infty) \to [0,1]$. Next, we introduce a stochastic timed automaton to represents the \mitldminus formula $\varphi$. 
\begin{definition}[Stochastic Timed Automaton]
	Given an \ac{mitl} formula $\varphi^d$ with the corresponding \ac{dta} $\dta = \langle L, \Sigma, \init, E, \inv, F \rangle$, the \ac{sta} that represents the distributions of \ac{mitl} formulas expressed by the \mitldminus formula $\varphi$ is a tuple
	$\sta = \langle L, \Sigma, \init, \Delta, \inv, F \rangle$
	with the following components:
	\begin{IEEEitemize}
		\item $L, \Sigma,\init, \inv, F$ are the same as in the \ac{dta} $\dta$ for $\varphi^d$.
		\item $\Delta \colon L \times \calV \times 2^U \times (\Sigma \times \constraints(\clocks)) \times (L \times \calV \times 2^{\clocks} \times 2^U) \to [0,1]$ is a \emph{probabilistic} transition function, constructed as follows. 
	\end{IEEEitemize}
\end{definition}

A state in the stochastic time automaton $\sta$ is $(\ell,v, Y)$ which includes a location, a clock vector, and an additional set of external events. The transition $(\ell, v, Y) \xrightarrow[\tau]{\sigma, p} (\ell', v', Y')$ occurs with a probability $p > 0$ only if $e = (\ell, \sigma, c, \ell', \clocks') \in E$ is defined in the \ac{dta}, $Y' = Y \setminus \sigma$, and for every $0 \le t < \tau$, $v + t \models \inv(\ell)$, $v + \tau \models c$, $v' = (v+ \tau)_{[\clocks' \leftarrow 0]}$, and $v' \models \inv(\ell')$. In addition, the clocks for occurred events, \ie, propositions in $U\setminus Y'$, will be reset and stopped. To define probability $p$ of this transition, we distinguish four cases:

\noindent Case 1: if $Y = \emptyset$, then 
$p = 1$. This is the case when all external events has occurred.

\noindent Case 2: if $Y \ne \emptyset$ and $\sigma \cap Y \ne \emptyset$ and $Y \setminus \sigma \ne \emptyset$, then
\begin{align*}
	p = \prod_{\alpha\in \sigma \cap Y}\mu_\alpha(v(\clock_\alpha)+\tau)\prod_{\beta \in Y \setminus \sigma} (1-\mu_\beta( v(\clock_\beta)+\tau)). 
\end{align*}
Case 3: if $Y \ne \emptyset$ and $\sigma \cap Y \ne \emptyset$ and $Y \setminus \sigma = \emptyset$, then
\begin{align*}
	p = \prod_{\alpha\in \sigma \cap Y}\mu_{\alpha}(v(\clock_\alpha)+\tau). 
\end{align*}
Case 4: if $Y \ne \emptyset$ and $\sigma \cap Y = \emptyset$ and $Y \setminus \sigma \ne \emptyset$, then
\begin{align*}
	p = \prod_{\beta \in Y \setminus \sigma} (1-\mu_{\beta}( v(\clock_\beta)+\tau)). 
\end{align*}
Note that we have augmented the timed automaton's state with a set $Y$ that records a subset of propositions in $U$ that has not been observed given the history. This is because it is known that the external events only occur once, and the distribution eventuality operator only describes the probability of the first occurrence. However, if $\alpha$ is in $Y$, then $\alpha$ has not been observed yet, and the occurrence of $\alpha$ follows its probability distribution. 

For ease of understanding, we provide an example. 
\begin{example}
	Let's consider a \mitldminus formula $\varphi =(\distEventually_{\mu_{1}} b_1 \land \Eventually (b_1 \land \Eventually_{\le 3} (b_3) ) \lor \distEventually_{\mu_2} b_2\land \Eventually (b_2\land \Eventually_{\le 3} (b_4) )$, where the random variables $b_1, b_2$ occur following distributions $\mu_1$ and $\mu_2$. The atomic propositions $\calAP$ are defined as follows:
	\begin{IEEEitemize}
		\item $b_1$: bus $1$ arrives.
		\item $b_2$: bus $2$ arrives.
		\item $b_3$: robot arrives at bus $1$'s station.
		\item $b_4$: robot arrives at bus $2$'s station.
	\end{IEEEitemize}
	Note that $U = \{b_1, b_2\}$.
	The bus $1$ and bus $2$'s arrivals follow probability distributions $\mu_1$ and $\mu_2$. If either bus arrives, then the robot needs to reach the respective bus station within $3$ steps. Note that in this case, we have a set $\clocks = \{\clock_1, \clock_2, \clock_{3}, \clock_{4}\}$ of clocks, where $\clock_1$ and $\clock_2$ reset when $b_1$ and $b_2$ occur, receptively; the clocks $\clock_{3}$ and $\clock_{4}$ are clocks on $b_3$ and $b_4$ triggered after seeing $b_1$ and $b_2$, respectively.
	
	First, the \ac{dta} for the $\varphi^d = \Eventually_{\ge 0} (b_1 \land \Eventually_{\le 3} (b_3) ) \lor \Eventually_{\ge 0} (b_2\land \Eventually_{\le 3} (b_4) )$ is shown in Fig. \ref{fig:example_dta} that is obtained by simplifying the formula after replacing distribution eventuality operators with bounded eventuality operators $\Eventually_{\ge 0}$.
	\begin{figure}[!htb]
		\centering
		\begin{tikzpicture}[->,>=stealth',shorten >=1pt,auto,node distance=5cm,scale=0.6, semithick, transform shape]
			\tikzstyle{every state}=[fill=black!10!white]
			\node[initial, state] 	(0)				        {$\init$};
			\node[state]            (1) [above right of=0]	{$\ell_1$};
			\node[state]            (2) [below right of=0]  {$\ell_2$};
			\node[state]            (4) [below right of=1]  {$\ell_4$};
			\node[accepting, state]	(3) [right of=4] 		{$\ell_3$};
			\path[->]   
			(0) edge [bend left] node       {$\top, \{b_1\}, \{\clock_1,\clock_3\}$}                            (1)
			(0) edge [bend right] node[left]{$\top, \{b_2\}, \{\clock_2,\clock_4\}$}                            (2)
			(0) edge[loop above] node       {$\top, \neg b_1 \land \neg b_2, \emptyset$}                        (0)
			(0) edge node                   {$\top, \{b_1, b_2\}, \{\clock_1, \clock_2, \clock_3, \clock_4\}$}  (4)
			
			(1) edge[loop above]	node {$\top, \neg b_2 \land \neg b_3, \emptyset$}    (1)
			(1) edge				node[left] {$\top, \{b_2\}, \{\clock_2,\clock_4\}$}  (4)
			(1) edge[bend left]		node {$\clock_{3} \le 3, \{b_3\}, \emptyset$}        (3)
			
			(2) edge[loop below]	node {$\top, \neg b_1 \land \neg b_4, \emptyset$} (2)
			(2) edge[bend right]	node[right] {$\clock_{4} \le 3, \{b_4\}, \emptyset$} (3)
			(2) edge				node {$\top, \{b_1\}, \{\clock_1, \clock_3\}$}       (4)
						
			(4) edge[loop above]    node {$\top, \neg b_3 \land \neg b_4, \emptyset$}    (4)
			(4) edge[]    node {$\clock_{3} \le 3, \{b_3\}, \emptyset$}	                 (3)
			(4) edge[bend right]    node[below] {$\clock_{4} \le 3, \{b_4\}, \emptyset$} (3)
			;
		\end{tikzpicture}
		\caption{Deterministic timed automaton $\dta$ for $\varphi^d$, where transitions going to a sink state are omitted.}
		\label{fig:example_dta}
	\end{figure}
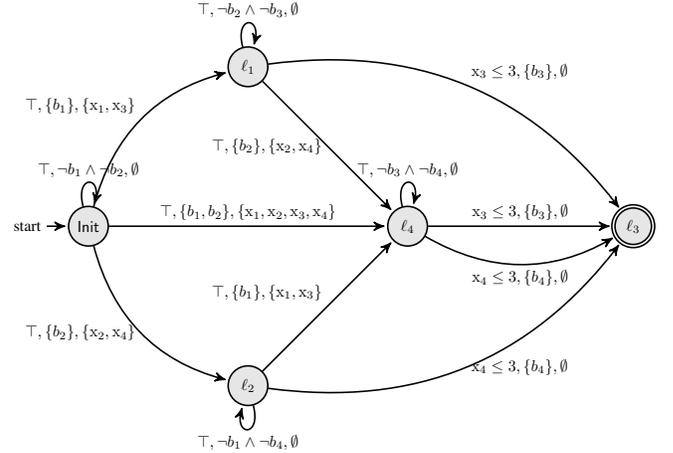
	Given the \ac{dta} in Fig. \ref{fig:example_dta}, we have the corresponding \ac{sta} $\sta$ that augments the locations $L$ with $2^U$. The locations are revised as follows (with the arrow means ``changes into''):
	$\init \to (\init, \{b_1,b_2\})$;
	$\ell_1 \to (\ell_1, \{b_2\})$ (for the event $b_1$ occurred upon reaching $\ell_1$); $\ell_2 \to (\ell_2, \{b_1\})$; $\ell_4\to (\ell_4, \emptyset)$. The accepting state $\ell_3$ will be modified into $(\ell_3, \{b_1\})$, $(\ell_3, \{b_2\})$, and $(\ell_3, \emptyset)$. Also, it is noted that the transitions are probabilistic and non-stationary.

We show a run on the \ac{sta} $\sta$ given a timed word $w = (\tau, \sigma) = (0, \sigma_0 = \emptyset) (1, \sigma_1 = \{b_1\}) (2, \sigma_2 = \emptyset) (3, \sigma_3 = \{b_3\}) \ldots$ as follows:
\begin{align*}
    & (\init, [\overset{\clock_{1}}{0},\overset{\clock_{2}}{0}, \overset{\clock_{3}}{0}, \overset{\clock_{4}}{0}], \{b_1, b_2\}) \\
	& \xrightarrow[0]{\emptyset, p_0} (\init, [\overset{\clock_{1}}{0}, \overset{\clock_{2}}{0}, \overset{\clock_{3}}{0}, \overset{\clock_{4}}{0}], \{b_1,b_2\})     \\ 
	& \xrightarrow[1]{\{b_1\}, p_1} (\ell_1, [\overset{\clock_{1}}{0}, \overset{\clock_{2}}{1}, \overset{\clock_{3}}{0}, \overset{\clock_{4}}{1}], \{b_2\})          \\ 
	& \xrightarrow[1]{\emptyset, p_2} (\ell_1, [\overset{\clock_{1}}{0}, \overset{\clock_{2}}{2}, \overset{\clock_{3}}{1}, \overset{\clock_{4}}{2}], \{b_2\})        \\
	& \xrightarrow[1]{\{b_3\}, p_3} (\ell_3, [\overset{\clock_{1}}{0}, \overset{\clock_{2}}{3}, \overset{\clock_{3}}{2}, \overset{\clock_{4}}{3}], \{b_2\}) \ldots. 
\end{align*}
When the automaton transits from state $(\init, [0, 0, 0, 0], \{b_1, b_2\})$ to state $(\ell_1, [1, 1, 0, 1], \{b_2\})$ after $1$ time unit, the event $b_1$ occurs and thus we reset the clocks $\clock_{1}$ and $\clock_{3}$. After that transition the clock $\clock_1$ remains to be $0$, and other clocks continue to increase without being reset. We list probabilities of transitions about this execution as follows:
\begin{align*}
	& p_0 = (1 - \mu_{b_1}(0)) (1 - \mu_{b_2}(0)), \\
	& p_1 = \mu_{b_1}(1) (1 - \mu_{b_2}(1)),       \\
	& p_2 = (1 - \mu_{b_2}(2)),                    \\
	& p_3 = (1 - \mu_{b_2}(3)).
\end{align*}
\label{example:1}
\end{example}

\subsection{Approximate-optimal planning with \mitldminus formulas} 
Note that $\mu$ is defined over $[0, \infty)$. Due to the difficulty of solving an infinite-state planning problem,
we want to convert the infinite-state model to a finite-state model. This conversion can be made for a subclass of distributions for $\mu_\alpha, \alpha \in U$; that is, all the probability distributions associated with distribution eventuality operators are with a subclass of probability distributions, where given a small constant $\xi$ there always exists a $T$ such that $\sum_{k = T+1}^{\infty} \mu(k) < \xi$. For instance, probability distributions within the exponential family probability distributions \cite{holland1981exponential} satisfy this property. We define a truncation point as follows. 

\begin{definition}[Truncation point]
	For a given formula $\varphi$, we let $tr(\varphi)[i]$ be the truncation point for clock $\clock_i$ given the error bound $\epsilon$ and $tr(\varphi) = [tr(\varphi)[i] \mid i \in \{1, 2, \ldots, M\}]$ be the vector of truncation points for a set $\clocks = \{\clock_1, \clock_2, \ldots, \clock_M\}$ of clocks. The truncated vector for a formula $\varphi$ is computed recursively as follows
	\begin{IEEEitemize}
		\item If $\varphi = \distEventually_{\mu} \alpha$ and clock $\clock_i$ is defined for expressing the distribution eventuality operator, then $tr(\varphi)[i] = T$ such that $\sum_{k=T+1}^{\infty} \mu(k) < \epsilon$; that is, the cumulative probability of first time observing $\alpha$ after truncation point $T$ is smaller than $\epsilon$.
		\item If $\varphi = \varphi_1 \until_{[l,u]} \varphi_2$ and clock $\clock_i$ is defined for expressing the bounded until operator $\until_{[l,u]}$, then $tr(\varphi)[i]= u$, which is the upper bound on the time interval.
		\item If $\varphi = \varphi_1 \land \varphi_2$, then $tr(\varphi) = \max\{tr(\varphi_1), tr(\varphi_2)\}$ where the maximum is element-wise given the vectors $tr(\varphi_1), tr(\varphi_2)$ for common clocks. For a clock that is in $\varphi_1$ but not in $\varphi_2$, $tr(\varphi)$ will include for that clock, and the truncation point is the same as the truncation point for that clock in $tr(\varphi_1)$. 
		\item If $\varphi = \neg \varphi$, then $tr(\varphi) = tr(\varphi)$.
	\end{IEEEitemize}
	\label{def:truncation_point}
\end{definition}
It is noted that the truncation points depend on the error bound $\epsilon$, $tr(\varphi)$ means $tr(\varphi, \epsilon)$. For notation convenience, we just write $tr(\varphi)$ whenever the bound is clear from the context. 
Given the truncated vector $tr(\varphi)$, we denote the set of all possible clock values after the truncation as $\calV^{tr}$, where for any $v \in \calV^{tr}$, we have $v[i] \le tr(\varphi)[i]$ for all $i \in \{1, 2, \ldots, M\}$.

\begin{definition}
	Given a stochastic timed automaton $\sta = \langle L, \Sigma, \init, \Delta, \inv, F \rangle$, a \ac{sta} with truncated clocks $\calV^{tr}$ is a tuple $\sta^{tr} = \langle L, \Sigma, \init, \Delta^{tr}, \inv, F \rangle$
	with the following components:
	\begin{IEEEitemize}
		\item $L$, $\Sigma$, $\init$, $\inv$, $F$ are the same as in the stochastic timed automaton for $\varphi$.
		\item For a transition $(\ell, v, Y) \xrightarrow[\tau]{\sigma, p} (\ell', v', Y')$ defined in $\sta$, if $v ,v'\in \calV^{tr} $, then the same transition occurs in $\sta^{tr}$. If $v\in \calV^{tr}$ but $v'\notin \calV^{tr}$, then $(\ell, v, Y) \xrightarrow[\tau]{\sigma, p} \sink$ where $\sink$ is an absorbing state.
	\end{IEEEitemize}
\end{definition}
Due to the finite clock vectors in $\calV^{tr}$, the states in the \ac{sta} with truncated clocks are finite. 

\begin{lemma}
	For a given formula $\varphi$, there exist $\sta$ and $\sta^{tr}$ with truncated vector $tr(\varphi)$. For a timed word $w$, we have $ P(w \in \calL(\sta_{\varphi}) \land w \not \in \calL(\sta^{tr}_{\varphi})) < \epsilon$.
\end{lemma}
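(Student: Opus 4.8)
The plan is to argue pathwise and then integrate. Fix a realized timed word $w$ and compare its run in $\sta_\varphi$ with its run in $\sta^{tr}_\varphi$. By the construction of $\sta^{tr}_\varphi$, every transition of $\sta_\varphi$ whose source clock vector and target clock vector both lie in $\calV^{tr}$ is retained verbatim, and a transition is redirected to the absorbing sink exactly when its target clock vector leaves $\calV^{tr}$. Hence the two runs coincide transition-by-transition until the first index at which some clock value $v[i]$ strictly exceeds $tr(\varphi)[i]$. If no such index occurs along the accepting prefix of the run in $\sta_\varphi$, then the very same accepting location of $F$ is reached in $\sta^{tr}_\varphi$, contradicting $w \notin \calL(\sta^{tr}_\varphi)$. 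Therefore, as a set of words, the discrepancy event $\{w \in \calL(\sta_\varphi) \land w \notin \calL(\sta^{tr}_\varphi)\}$ is contained in the event that the accepting run in $\sta_\varphi$ drives some clock strictly beyond its truncation point before reaching $F$.

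I would then split the clocks according to their origin in Definition~\ref{def:truncation_point}. For a clock $\clock_i$ introduced for a bounded operator $\until_{[l,u]}$, the truncation point is the exact upper bound $u$, and in the MITL-to-DTA construction of \cite{alur1996benefits,maler2006mitl} every guard and location invariant involving $\clock_i$ is of the form $\clock_i \le u$. Consequently, once $\clock_i > u$ the corresponding edge is permanently disabled in $\sta_\varphi$ as well, so any run that reaches $\clock_i > u$ without having already fired that edge is foreclosed from acceptance in $\sta_\varphi$ too — truncating bounded-until clocks is \emph{language-exact} and contributes nothing to the discrepancy event. The only clocks that can cause a word accepted by $\sta_\varphi$ to be rejected by $\sta^{tr}_\varphi$ are thus the distribution-eventuality clocks $\clock_\alpha$, $\alpha \in U$, which are unbounded in $\sta_\varphi$ (their operator having been replaced by $\Eventually_{\ge 0}$) but truncated at $T_\alpha$ in $\sta^{tr}_\varphi$.

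Next I would translate "$\clock_\alpha$ exceeds $T_\alpha$ before being reset'' into "the external proposition $\alpha$ has its first occurrence strictly after time $T_\alpha$,'' using that $\clock_\alpha$ runs from the initial instant until $\alpha$ fires and is then reset and stopped. Reading off the probabilistic transition function (Cases~2--4), the probability that $\alpha$ first fires after its truncation point equals the tail mass $\sum_{k = T_\alpha + 1}^{\infty} \mu_\alpha(k)$, which by the choice of $tr(\varphi)$ in Definition~\ref{def:truncation_point} is $< \epsilon$. Invoking Assumption~1 (mutual independence of the external events) to factor the joint law of the occurrences, I would bound the discrepancy event by the coarser event that \emph{some} relevant $\alpha$ first fires after $T_\alpha$, since any acceptance in $\sta_\varphi$ that is lost in $\sta^{tr}_\varphi$ must route through at least one such late-firing event. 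A union bound over $\alpha \in U$ then closes the estimate; with a single late event it is immediate, and in general one takes the per-clock tail budget to be $\epsilon/|U|$ so that the total stays $< \epsilon$, recovering the stated bound.

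The main obstacle I anticipate is the structural step for the bounded-until clocks: making precise that surpassing the bound $u$ in $\sta_\varphi$ genuinely forecloses acceptance through that subformula, which requires appealing to the specific shape of the guards and invariants produced by the construction in \cite{alur1996benefits,maler2006mitl} (all of the form $\clock_i \le u$, with overshoot routed to a non-accepting sink already in the underlying \ac{dta}). A secondary point of care is the bookkeeping for the union bound in the presence of the disjunctive and conjunctive structure of $\varphi$, which I would manage precisely through the containment of the discrepancy event in "some relevant $\alpha$ fires after $T_\alpha$'' established above.
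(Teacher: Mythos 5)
Your proposal is essentially correct in strategy but follows a genuinely different route from the paper. The paper proves the lemma by structural induction on $\varphi$: a base case for $\distEventually_{\mu}\alpha$ identifying the discrepancy probability with the tail mass $\sum_{k=T+1}^{\infty}\mu(k)<\epsilon$, an exactness claim for $\until_{[l,u]}$, a trivial negation case, and a case analysis for $\varphi_1\land\varphi_2$ that invokes independence and bounds the inductive step by $\max\{\epsilon(1-\epsilon),\epsilon^2\}<\epsilon$, so that nominally no error accumulates across connectives. You instead argue globally and pathwise: the discrepancy event is contained in ``some clock overshoots its truncation point before the run of $\sta_{\varphi}$ reaches $F$,'' the bounded-until clocks are language-exact, only the distribution-eventuality clocks remain, and a union bound over $U$ closes the estimate. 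Your route is more transparent about the one quantitative step where both arguments are delicate: with per-clock tail budgets of $\epsilon$, as Definition~\ref{def:truncation_point} actually prescribes, your union bound yields $|U|\epsilon$, and you honestly propose re-budgeting each tail to $\epsilon/|U|$. The paper's conjunction step conceals the same issue: its three cases are mutually exclusive, so their probabilities should be summed rather than maximized, giving $2\epsilon-\epsilon^2$ for a conjunction of two subformulas --- in effect the same union bound you perform explicitly. So your version is, if anything, the more defensible accounting.

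One point you flag as an obstacle deserves emphasis, because it is also an unstated premise of the paper's induction: the claim that a clock $\clock_i$ introduced for $\until_{[l,u]}$ contributes nothing to the discrepancy requires that $\clock_i$ cannot exceed $u$ along an accepting run of $\sta_{\varphi}$ before its guarded edge fires. This holds only if such clocks are frozen until their triggering event occurs (as Example~\ref{example:1} intends when it speaks of clocks ``triggered after seeing $b_1$''); if they ran from time zero, a late trigger would carry them past $u$ prior to reset, the truncated automaton would divert to $\sink$, and accepting words would be lost with probability unrelated to $\epsilon$. Making that freezing semantics explicit is the one step you must still supply; once it is in place, your containment argument and the re-budgeted union bound give a complete proof.
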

\begin{proof}[Proof (Sketch)]
    We first prove for the base case, where there is at most one clock in any subformula $\varphi$. 
    \begin{IEEEitemize}
        \item Given $\varphi = \distEventually_{\mu} \alpha$ where $\alpha\in U$, a timed word $w = (\tau_0, \sigma_0) (\tau_1, \sigma_1) \ldots (\tau_n, \sigma_n) \ldots \models \varphi$ if there exists an index $n > 0$, such that $\alpha \in \sigma_n$, then we have $P(w \in \calL(\sta_{\varphi}) \land w \not \in \calL(\sta^{tr}_{\varphi})) = P(n > T)$, where $T = tr(\varphi, \epsilon)$. Given the definition of the truncation point $tr(\varphi, \epsilon)$, we have $P(n > T) = \sum_{k=T+1}^{\infty}\mu(k) < \epsilon$.
        \item If $\varphi=p_1\until_{[l,u]}p_2$, then $tr(\varphi)=u$, a word $w$ accepted in $\sta$ will be accepted in $\sta^{tr}$ as the clock is upper bounded and the maximal value is used as the truncation point.
        \item If $\varphi = \neg \varphi$, due to the choice of the truncated vector $tr(\varphi) = tr(\neg \varphi)$, there is no error introduced by the negation operator.
    
        \item If $\varphi = \varphi_1 \land \varphi_2$, assume that we have ${P(w \in \calL(\sta_{\varphi_1}) \land w \not \in \calL(\sta^{tr}_{\varphi_1}))} < \epsilon$ and ${P(w \in \calL(\sta_{\varphi_2}) \land w \not \in \calL(\sta^{tr}_{\varphi_2}))} < \epsilon$. We assume events in $\varphi_1$ and $\varphi_2$ are independent, and we distinguish three cases:
		\begin{IEEEitemize}
		    \item Case 1: If $w \in \calL(\sta^{tr}_{\varphi_1})$ and $w \not \in \calL(\sta^{tr}_{\varphi_2})$, then we have ${P(w \in \calL(\sta_{\varphi}) \land w \not \in \calL(\sta^{tr}_{\varphi}))} < \epsilon \cdot (1 - \epsilon)$
		    \item Case 2: If $w \not \in \calL(\sta^{tr}_{\varphi_1})$ and $w \in \calL(\sta^{tr}_{\varphi_2})$, then we have ${P(w \in \calL(\sta_{\varphi}) \land w \not \in \calL(\sta^{tr}_{\varphi}))} <(1 - \epsilon) \cdot \epsilon$
		    \item Case 3: If $w \not \in \calL(\sta^{tr}_{\varphi_1})$ and $w \not \in \calL(\sta^{tr}_{\varphi_2})$, then we have ${P(w \in \calL(\sta_{\varphi}) \land w \not \in \calL(\sta^{tr}_{\varphi}))} <\epsilon^2$
		\end{IEEEitemize}
		There only one case can be true for every timed word $w$, so we have ${P(w \in \calL(\sta_{\varphi}) \land w \not \in \calL(\sta^{tr}_{\varphi}))} < \max\{\epsilon (1 - \epsilon), \epsilon^2\} < \epsilon$ as $\epsilon \le 1$. 
	
		\item If $\varphi = \varphi_1 \until_{[l, u]} \varphi_2$, we can write the timed word $w$ as $w_1 w_2$ and use the same technique to prove ${P(w_1w_2 \in \calL(\sta_{\varphi}) \land w_1w_2 \not \in \calL(\sta^{tr}_{\varphi}))} < \epsilon$. Due to the space limitation, we omit the proof.
	\end{IEEEitemize}

    The vector of truncation points is computed recursively in Def. \ref{def:truncation_point}, we can conclude that ${P(w \in \calL(\sta_{\varphi}) \land w \not \in \calL(\sta^{tr}_{\varphi}))} < \epsilon$.
\end{proof}

\subsection{Composition: Product Markov Decision Processes}
Given that the specification of the environment informs us that the events in $U$ are not nondeterministic but following given distributions, we now introduce a product operation between the two-player transition system and the stochastic timed automaton with truncated clocks. With this product, we reduce the game to a one-player stochastic game, also known as a \ac{mdp} with a reachability objective.
\begin{definition}[Product \ac{mdp}]
	Given a two-player turn-based probabilistic transition system $\calG = \langle S \cup S \times A, A \cup 2^U, P, \labelling, s_0, \calAP \rangle$ and a \ac{sta} with truncated clocks $\sta^{tr} = \langle L, \Sigma, \init, \Delta^{tr}, \inv, F \rangle$ associated with the \mitldminus formula $\varphi$, a product Markov Decision Process is a tuple
	\[
		\prodmdp = \calG \times \sta^{tr} = \langle Z, A, \prodtrans, z_0, \prodf \rangle
	\]
	with the following components:
	\begin{IEEEitemize}
		\item $Z = S \times (L \times \calV^{tr} \times 2^U \cup\{\sink\})$ is the set of product states. For notation convince, we denote $H\coloneqq (L \times \calV^{tr} \times 2^U \cup\{\sink\})$. 
		\item $A$ is the set of robot actions.
		$\prodtrans: Z\times A \rightarrow \dist{Z}$ is defined as follows. At a state $(s, h)$, the robot selects action $a\in A$, $\prodtrans((s,h),a,(s',h')) = P((s,a),e, s') \cdot p$, where $p$ is the probability of the transition $h\xrightarrow[1]{\labelling(s'),p}h'$.
		 
		\item $z_0 = (s_0, (\ell_0, \vec{0}, U))$ is the initial state that includes the initial state $s_0$ in the two-player turn-based probabilistic transition system, and $(\init, \vec{0}, U) \xrightarrow[0]{\labelling(s_0), 1} (\ell_0, \vec{0}, U)$.
		\item $\prodf =S\times F \times \calV^{tr} \times 2^U$ is the set of final states, where $F$ denotes the set of accepting states in the \ac{sta}. The states in $\prodf$ are absorbing states.
	\end{IEEEitemize}
 \end{definition}

We define a reward function over the product \ac{mdp} as follows. For a product state $z \in Z$, an action $a \in A$, and the next product state $z' \in Z$,
\begin{align*}
	R(z, a, z') = \indicator(z \in Z \setminus \prodf) \indicator(z' \in \prodf).
\end{align*}
This reward function describes that a reward of one is received only if the robot transits from a product state not in $\prodf$ to a product state in $\prodf$, and we have the expected reward by choosing action $a$ at state $z$, \ie, $R(z, a) = \sum_{z'} \prodtrans(z' \mid z, a) R(z, a, z')$.

Given the product \ac{mdp} $\prodmdp$, to maximize the probability of satisfying the \mitldminus formula $\varphi$ starting at an initial product state $z_0$ is equivalent to maximize the total rewards:
$\pi^\ast = \argmax_{\pi} \Expect[\sum_{t=0}^{\infty} R(Z_t, A_t) \mid Z_0 = z_0, \pi]$. Given the product $\prodmdp = \langle Z, A,\prodtrans, z_0, \prodf \rangle$, we adopt the model-based value iteration to solve for the optimal value function and policy.

\section{Case Study}
\label{sec:case_study}
We consider a motion planning problem inspired by the running example \ref{example:1}, a robot catching buses problem. As shown in Fig. \ref{fig:grid_world}, there exists a robot whose goal is to maximize the probability of catching buses within a $4 \times 4$ grid world. A bouncing wall surrounds the grid world, \ie, if the robot hits the wall, it stays at the previous cell. The robot has the action set $\{\text{N, W, E, S}\}$ for moving in four compass directions, and all actions have probabilistic outcomes (see Fig.~\ref{fig:grid_world}). There are two bus stations, namely, region $A$ and region $B$, where bus $1$ and bus $2$ arrive according to two independent probability distributions. In this example, we consider two cases: case $1$, $b_1$ and $b_2$ follows geometric probability distributions $\mu_1$ and $\mu_2$ with parameters $0.8$ and $0.3$; case $2$, $b_1$ and $b_2$ follows geometric distributions $\bar{\mu}_1$ and $\bar{\mu}_2$ with parameters with $0.4$ and $0.7$. For the atomic proposition $\alpha \in U$ following a geometric probability distribution $\mu$ with parameter $p$, we have the probability $p$ that $\alpha$ evaluates true and the probability $1-p$ that $\alpha$ evaluates false every time step until the first time $\alpha$ is observed.
\begin{figure}[!htb]
	\centering
	\includegraphics[width=0.8\linewidth]{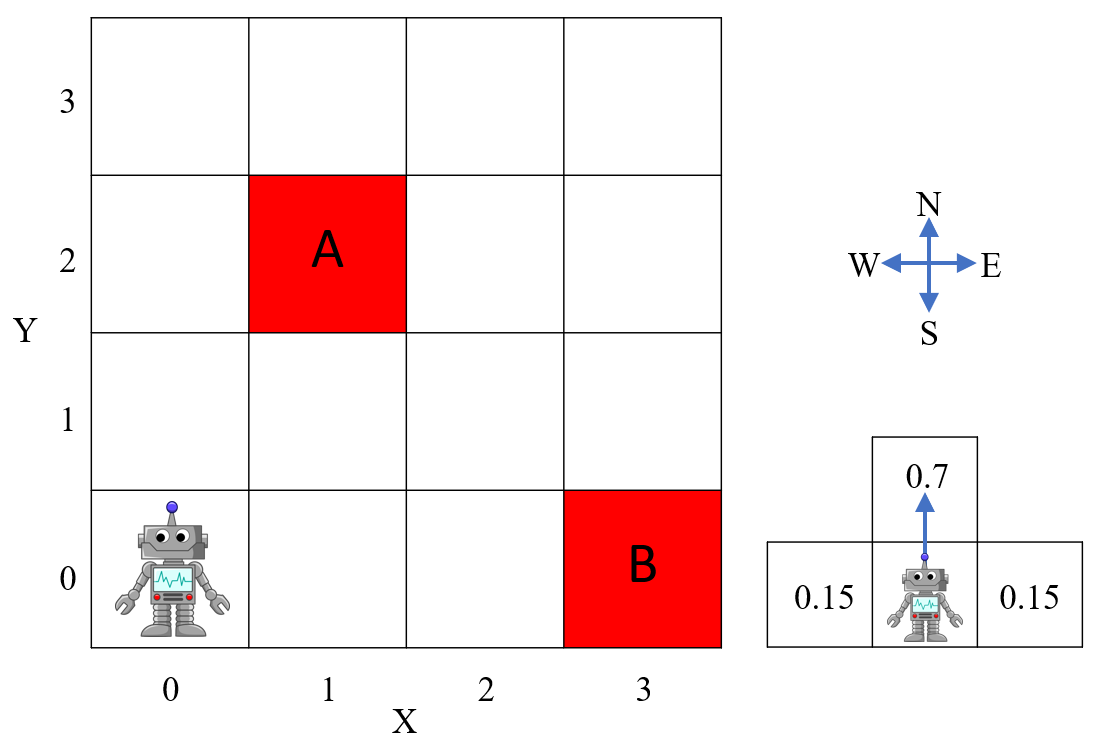}
	\caption{The $4 \times 4$ grid world.}
	\label{fig:grid_world}
\end{figure}

\begin{table}[!htb]
\centering
\caption{Initial state's value versus error bounds $\epsilon_1, \epsilon_2$ under different probability distributions $\mu_1, \mu_2$ and $\bar{\mu}_1, \bar{\mu}_2$.}
\label{tab:peformance}
\resizebox{0.8\linewidth}{!}{%
\begin{tabular}{l|llll}
\hline
& State size & 518400 & 921600 & 1440000 \\ \hline
\multirow{2}{*}{Case 1: $\mu_1, \mu_2$} & $\epsilon_1$ & 0.343 & 0.2401 & 0.16807 \\
                                        & Value      & 0.43264 & 0.60645  & 0.60645 \\ \hline
\multirow{2}{*}{Case 2: $\bar{\mu}_1, \bar{\mu}_2$} & $\epsilon_2$ & 0.216 & 0.1296 & 0.07776 \\
                                        & Value      & 0.51411 & 0.62313 & 0.62313 \\ \hline
\end{tabular}%
}
\end{table}
Given different error bounds associated with corresponding of truncated vectors, we list the initial state' values; that is, the probabilities of robot satisfying the given specification, in Table. \ref{tab:peformance}. The initial state's value increases with decreasing error bounds $\epsilon_1, \epsilon_2$ for both cases. However, the state space in the product \ac{mdp} increases as the error bounds decrease too.

Given the error bound $\epsilon_2 = 0.07776$, we simulate trajectories in the case $2$ using the obtained policy. One trajectory is as follows: 
\begin{align*}
    & (((0, 0), \emptyset, \{b_1, b_2\}), (\init, (0, 0, 0, 0), \{b_1, b_2\})) \\
    & \xrightarrow{\text{E}} ((((0, 0), \emptyset, \{b_1, b_2\}), \text{E}), (\init, (0, 0, 0, 0), \{b_1, b_2\}))) \\
    & \xrightarrow{b_2} ((((1, 0), \{b_2\}, \{b_1\}), (\ell_2, (1, 0, 1, 0), \{b_1\})) 	 \\
    & \xrightarrow{\text{E}} ((((1, 0), \{b_2\}, \{b_1\}), \text{E}), (\ell_2, (1, 0, 1, 0), \{b_1\})) \\
    & \xrightarrow{b_1} (((1, 1), \{b_1\}, \emptyset), (\ell_4, (0, 0, 0, 1), \emptyset)) 	 \\
    & \xrightarrow{\text{N}} ((((1, 1), \{b_1\}, \emptyset), \text{N}), (\ell_4, (0, 0, 0, 1), \emptyset)) \\
    & \xrightarrow{\emptyset} (((1, 2), \emptyset, \emptyset), (\ell_3, (0, 0, 1, 2), \emptyset))
\end{align*}
The state is understood as, for example, $(((0, 0), \emptyset, \{b_1, b_2\})$ means that the robot is at position $(0,0)$, no external events happen in the most recent transition, and $b_1,b_2$ have not been observed. Since in case 2 the atomic proposition $b_2$ has a higher probability of evaluating true every step until the first time $b_2$ is observed, the robot initially takes two consecutive actions E to go to the region $B$, and the bus 2 arrives after it took the first action E. However, due to the stochastic dynamics, the robot reaches state $(1, 1)$ by taking action E at state $(1, 0)$ while bus 1 arrives ($b_1$ evaluates true). Seeing it has more chance to reach bus 1, the robot takes action N to reach region $A$ instead of going to region $B$.

\section{Conclusion}
\label{sec:conclusion}
We investigated how to synthesize an optimal control policy for a new class of formal specifications which extend the Metric Interval Temporal Logic with a distribution eventuality operator $\distEventually$. This new operator allows us to incorporate prior information of uncontrollable events in the interacting environments for decision-making with time-sensitive and event-triggered temporal constraints. We introduce a truncation of clocks to reduce the countably infinite \ac{mdp} into a finite-state model and provide the error bound on the solution. In the future, we will investigate the scalability of our method. With abstraction methods such as region automaton \cite{alur1994theory}, we can potentially reduce the size of the approximated finite-state model. We are also considering incorporating approximate dynamic programming \cite{li2019approximate} for large-scale \ac{mdp}s to handle the issue of scalability.

\bibliographystyle{IEEEtran}			
\bibliography{refs}

\begin{thebibliography}{10}
\providecommand{\url}[1]{#1}
\csname url@samestyle\endcsname
\providecommand{\newblock}{\relax}
\providecommand{\bibinfo}[2]{#2}
\providecommand{\BIBentrySTDinterwordspacing}{\spaceskip=0pt\relax}
\providecommand{\BIBentryALTinterwordstretchfactor}{4}
\providecommand{\BIBentryALTinterwordspacing}{\spaceskip=\fontdimen2\font plus
\BIBentryALTinterwordstretchfactor\fontdimen3\font minus
  \fontdimen4\font\relax}
\providecommand{\BIBforeignlanguage}[2]{{%
\expandafter\ifx\csname l@#1\endcsname\relax
\typeout{** WARNING: IEEEtran.bst: No hyphenation pattern has been}%
\typeout{** loaded for the language `#1'. Using the pattern for}%
\typeout{** the default language instead.}%
\else
\language=\csname l@#1\endcsname
\fi
#2}}
\providecommand{\BIBdecl}{\relax}
\BIBdecl

\bibitem{li2019approximate}
L.~Li and J.~Fu, ``Approximate dynamic programming with probabilistic temporal
  logic constraints,'' in \emph{2019 American Control Conference (ACC)}.\hskip
  1em plus 0.5em minus 0.4em\relax IEEE, 2019, pp. 1696--1703.

\bibitem{ding2014optimal}
X.~Ding, S.~L. Smith, C.~Belta, and D.~Rus, ``Optimal control of markov
  decision processes with linear temporal logic constraints,'' \emph{IEEE
  Transactions on Automatic Control}, vol.~59, no.~5, pp. 1244--1257, 2014.

\bibitem{lee2008cyber}
E.~A. Lee, ``Cyber physical systems: Design challenges,'' in \emph{2008 11th
  IEEE international symposium on object and component-oriented real-time
  distributed computing (ISORC)}.\hskip 1em plus 0.5em minus 0.4em\relax IEEE,
  2008, pp. 363--369.

\bibitem{manna2012temporal}
Z.~Manna and A.~Pnueli, \emph{Temporal Verification of Reactive Systems:
  Safety}.\hskip 1em plus 0.5em minus 0.4em\relax Berlin, Heidelberg:
  Springer-Verlag, 1995.

\bibitem{koymans1990specifying}
R.~Koymans, ``Specifying real-time properties with metric temporal logic,''
  \emph{Real-time systems}, vol.~2, no.~4, pp. 255--299, 1990.

\bibitem{alur1996benefits}
R.~Alur, T.~Feder, and T.~A. Henzinger, ``The benefits of relaxing
  punctuality,'' \emph{Journal of the ACM (JACM)}, vol.~43, no.~1, pp.
  116--146, 1996.

\bibitem{maler2004monitoring}
O.~Maler and D.~Nickovic, ``Monitoring temporal properties of continuous
  signals,'' in \emph{Formal Techniques, Modelling and Analysis of Timed and
  Fault-Tolerant Systems}.\hskip 1em plus 0.5em minus 0.4em\relax Springer,
  2004, pp. 152--166.

\bibitem{pnueli1977temporal}
A.~Pnueli, ``The temporal logic of programs,'' in \emph{18th Annual Symposium
  on Foundations of Computer Science (sfcs 1977)}.\hskip 1em plus 0.5em minus
  0.4em\relax IEEE, 1977, pp. 46--57.

\bibitem{lindemann2017robust}
L.~Lindemann and D.~V. Dimarogonas, ``Robust motion planning employing signal
  temporal logic,'' in \emph{2017 American Control Conference (ACC)}.\hskip 1em
  plus 0.5em minus 0.4em\relax IEEE, 2017, pp. 2950--2955.

\bibitem{yang2020continuous}
G.~Yang, C.~Belta, and R.~Tron, ``Continuous-time signal temporal logic
  planning with control barrier functions,'' in \emph{2020 American Control
  Conference (ACC)}.\hskip 1em plus 0.5em minus 0.4em\relax IEEE, 2020, pp.
  4612--4618.

\bibitem{raman2015reactive}
V.~Raman, A.~Donz{\'e}, D.~Sadigh, R.~M. Murray, and S.~A. Seshia, ``Reactive
  synthesis from signal temporal logic specifications,'' in \emph{Proceedings
  of the 18th international conference on hybrid systems: Computation and
  control}, 2015, pp. 239--248.

\bibitem{farahani2015robust}
S.~S. Farahani, V.~Raman, and R.~M. Murray, ``Robust model predictive control
  for signal temporal logic synthesis,'' \emph{IFAC-PapersOnLine}, vol.~48,
  no.~27, pp. 323--328, 2015.

\bibitem{liu2017communication}
Z.~Liu, J.~Dai, B.~Wu, and H.~Lin, ``Communication-aware motion planning for
  multi-agent systems from signal temporal logic specifications,'' in
  \emph{2017 American Control Conference (ACC)}.\hskip 1em plus 0.5em minus
  0.4em\relax IEEE, 2017, pp. 2516--2521.

\bibitem{nikou2016cooperative}
A.~Nikou, J.~Tumova, and D.~V. Dimarogonas, ``Cooperative task planning of
  multi-agent systems under timed temporal specifications,'' in \emph{American
  Control Conference}.\hskip 1em plus 0.5em minus 0.4em\relax IEEE, 2016, pp.
  7104--7109.

\bibitem{kapoor2020model}
P.~Kapoor, A.~Balakrishnan, and J.~V. Deshmukh, ``Model-based reinforcement
  learning from signal temporal logic specifications,'' \emph{arXiv preprint
  arXiv:2011.04950}, 2020.

\bibitem{saha2016milp}
S.~Saha and A.~A. Julius, ``An milp approach for real-time optimal controller
  synthesis with metric temporal logic specifications,'' in \emph{2016 American
  Control Conference (ACC)}.\hskip 1em plus 0.5em minus 0.4em\relax IEEE, 2016,
  pp. 1105--1110.

\bibitem{d2002timed}
D.~D’souza and P.~Madhusudan, ``Timed control synthesis for external
  specifications,'' in \emph{Annual Symposium on Theoretical Aspects of
  Computer Science}.\hskip 1em plus 0.5em minus 0.4em\relax Springer, 2002, pp.
  571--582.

\bibitem{xu2019controller}
Z.~Xu, F.~M. Zegers, B.~Wu, W.~Dixon, and U.~Topcu, ``Controller synthesis for
  multi-agent systems with intermittent communication. a metric temporal logic
  approach,'' in \emph{57th Annual Allerton Conference on Communication,
  Control, and Computing (Allerton)}.\hskip 1em plus 0.5em minus 0.4em\relax
  IEEE, 2019, pp. 1015--1022.

\bibitem{kovtunova2018cutting}
A.~Kovtunova and R.~Pe{\~n}aloza, ``Cutting diamonds: A temporal logic with
  probabilistic distributions,'' in \emph{Sixteenth International Conference on
  Principles of Knowledge Representation and Reasoning}, 2018.

\bibitem{alur1994theory}
R.~Alur and D.~L. Dill, ``A theory of timed automata,'' \emph{Theoretical
  computer science}, vol. 126, no.~2, pp. 183--235, 1994.

\bibitem{bertrand2014stochastic}
N.~Bertrand, P.~Bouyer, T.~Brihaye, Q.~Menet, C.~Baier, M.~Groesser, and
  M.~Jurdzinski, ``Stochastic {{Timed Automata}},'' \emph{Logical Methods in
  Computer Science}, vol.~10, no.~4, p.~6, Dec. 2014.

\bibitem{maler2006mitl}
O.~Maler, D.~Nickovic, and A.~Pnueli, ``From mitl to timed automata,'' in
  \emph{International conference on formal modeling and analysis of timed
  systems}.\hskip 1em plus 0.5em minus 0.4em\relax Springer, 2006, pp.
  274--289.

\bibitem{holland1981exponential}
P.~W. Holland and S.~Leinhardt, ``An exponential family of probability
  distributions for directed graphs,'' \emph{Journal of the american
  Statistical association}, vol.~76, no. 373, pp. 33--50, 1981.

\end{thebibliography}

\end{document}